\documentclass[11pt]{article}

\usepackage[hmargin=1in,vmargin=1.25in]{geometry}
\usepackage{amsmath,amssymb,latexsym,amsthm}
\usepackage{url}

\newtheorem{theorem}{Theorem}
\newtheorem{lemma}{Lemma}
\newtheorem{result}{Result}

\begin{document}

\title{Generalization of a few results in Integer Partitions}
\author{Manosij Ghosh Dastidar\thanks{Ramakrishna Mission Vidyamandira, Belur, West Bengal, India. Email: gdmanosij@gmail.com.} \and Sourav Sen Gupta\thanks{Corresponding author. Indian Statistical Institute, Kolkata, India. Email: sg.sourav@gmail.com.}}
\date{}

\maketitle

\begin{abstract}
In this paper, we generalize a few important results in Integer Partitions; namely the results known as Stanley's theorem and Elder's theorem, and the congruence results proposed by Ramanujan for the partition function. We generalize the results of Stanley and Elder from a fixed integer to an array of subsequent integers, and propose an analogue of Ramanujan's congruence relations for the `number of parts' function instead of the partition function. We also deduce the generating function for the `number of parts', and relate the technical results with their graphical interpretations through a novel use of the Ferrer's diagrams.
\end{abstract}

{\flushleft \textbf{Keywords:}} Stanley's theorem; Elder's theorem; Ramanujan congruences; Ferrer's diagram.

\section{Introduction}
Partitioning a positive integer $n$ as sum of certain positive integers is a well known problem in the domain of number theory and combinatorics. A {\em partition} of a positive integer $n$ is any non-increasing sequence of positive integers that add up to $n$. The partition function $P(n)$ is defined as the number of unordered partitions of $n$. We also define $Q_k(n)$ as the number of occurrences of the part $k$ in all partitions of $n$, $V_k(n)$ as the number of parts occurring $k$ or more times in the partitions of $n$, and $S(n)$ as the sum of the numbers of distinct members in the partitions of $n$. This notation will be followed throughout the paper.

One of the very well referred results in integer partitions is the one presented by Stanley~\cite{stanley1}, which states the following.
\begin{result}[Stanley]
\label{stanley}
The total number of 1's that occur among all unordered partitions of a positive integer is equal to the sum of the numbers of distinct members of those partitions. In terms of the notation, $S(n) = Q_1(n)$.
\end{result}
One direction of generalizing Result~\ref{stanley} is the Elder's theorem~\cite{elder1}, which states the following.
\begin{result}[Elder]
\label{elder}
Total number of occurrences of an integer $k$ among all unordered partitions of $n$ is equal to the number of occasions that a part occurs $k$ or more times in a partition. In terms of the notation, $V_k(n) = Q_k(n)$.
\end{result} 

In this paper, we generalize Result~\ref{stanley} in a different direction than what has been proposed in Result~\ref{elder}. We consider not only a single integer $n$, but generalize the premise to include subsequent integers. Our first result is as follows.
\begin{theorem}
\label{stanleyext}
Given any positive integer $n$ and any positive integer $k$,
$$S(n) \: = \: Q_k(n) + Q_k(n+1) + Q_k(n+2) + \cdots + Q_k(n+k-1) \: = \: 
\sum_{i=0}^{k-1} Q_k(n+i).$$
\end{theorem}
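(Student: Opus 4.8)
The plan is to reduce both sides to the ordinary partition function $P$ by first establishing the auxiliary identity
$$Q_k(n) \;=\; \sum_{j\ge 1} P(n-jk) \;=\; P(n-k)+P(n-2k)+P(n-3k)+\cdots,$$
under the conventions $P(0)=1$ and $P(m)=0$ for $m<0$. Once this is available, each side of the theorem becomes an explicit sum of values of $P$, and the claim follows from a bookkeeping comparison together with Stanley's Result~\ref{stanley}.

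To prove the auxiliary identity I would count occurrences of the part $k$ by \emph{layers}. Writing $m_k(\lambda)$ for the number of parts equal to $k$ in a partition $\lambda$, the elementary decomposition $m_k(\lambda)=\sum_{j\ge 1}[\,m_k(\lambda)\ge j\,]$ summed over all $\lambda\vdash n$ gives
$$Q_k(n)=\sum_{\lambda\vdash n} m_k(\lambda)=\sum_{j\ge 1}\#\{\lambda\vdash n : m_k(\lambda)\ge j\}.$$
The point is that deleting $j$ copies of $k$ from a partition of $n$ that has at least $j$ such copies is a bijection onto the set of \emph{all} partitions of $n-jk$; hence the inner count equals $P(n-jk)$, which yields the identity. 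This removal map is exactly the operation of stripping $j$ rows of length $k$ from the Ferrers diagram, which is presumably where the graphical interpretation enters.

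For the main statement I then sum the identity over $i=0,1,\dots,k-1$:
$$\sum_{i=0}^{k-1} Q_k(n+i)=\sum_{i=0}^{k-1}\sum_{j\ge 1} P(n+i-jk).$$
The crux is to track how the arguments $n+i-jk$ distribute: for each fixed $j\ge 1$ the index $i$ sweeps out the block of consecutive integers $\{\,n-jk,\dots,n-(j-1)k-1\,\}$, and as $j$ ranges over $1,2,3,\dots$ these blocks are pairwise disjoint and tile the whole range of integers at most $n-1$. Thus the double sum collapses to $\sum_{m=0}^{n-1}P(m)$. Finally, the $k=1$ instance of the auxiliary identity gives $Q_1(n)=\sum_{m=0}^{n-1}P(m)$, and Stanley's Result~\ref{stanley} identifies this quantity with $S(n)$; chaining the two equalities delivers $S(n)=\sum_{i=0}^{k-1}Q_k(n+i)$ (and exhibits the theorem as a genuine generalization, since $k=1$ recovers Stanley verbatim). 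I expect the only real subtlety to be handling the conventions $P(0)=1$ and $P(m)=0$ for $m<0$ consistently, so that the tiling covers precisely $\{0,1,\dots,n-1\}$ and no stray terms survive; the remainder is a direct comparison of finite sums.
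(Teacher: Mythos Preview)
Your proof is correct and follows essentially the same route as the paper: both establish $Q_k(n)=\sum_{j\ge 1}P(n-jk)$ and then regroup $\sum_{m=0}^{n-1}P(m)$ into $k$ residue classes to recognize each class as some $Q_k(n+i)$. The only cosmetic difference is that the paper obtains the auxiliary identity by iterating the recursion $Q_k(n)=Q_k(n-k)+P(n-k)$ (its Lemma~\ref{lem1}) together with the separately cited fact $Q_1(n)=\sum_{i=0}^{n-1}P(i)$, whereas your layer decomposition $m_k(\lambda)=\sum_{j\ge 1}[m_k(\lambda)\ge j]$ yields the identity in one stroke and recovers the $k=1$ case automatically.
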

We also generalize Result~\ref{elder} in a similar direction by including subsequent integers into the domain. The formal result is stated as follows.
\begin{theorem}
\label{elderext}
Given any positive integer $n$ and any positive integer $k$,
$$V_k(n) \: = \: Q_{rk}(n) + Q_{rk}(n+k) + Q_{rk}(n+2k) + \cdots + Q_{rk}(n+(r-1)k) \: = \: 
\sum_{i=0}^{r-1} Q_{rk}(n+ik),$$
where $r$ can be chosen to be any positive integer.
\end{theorem}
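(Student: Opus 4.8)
The plan is to reduce everything to the ordinary partition function $P$ by means of a single counting identity, and then to finish with a clean re-indexing argument that invokes Elder's theorem.

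First I would establish the auxiliary identity
$$Q_k(m) \;=\; \sum_{j \geq 1} P(m - jk),$$
adopting the conventions $P(0) = 1$ and $P(t) = 0$ for $t < 0$. The idea is to count occurrences of the part $k$ layer by layer: a partition of $m$ containing exactly $m_0$ copies of $k$ contributes $m_0$ to $Q_k(m)$, and $m_0$ equals the number of integers $j \geq 1$ for which that partition has \emph{at least} $j$ copies of $k$. Summing over all partitions of $m$ and exchanging the order of summation, it remains to note that the number of partitions of $m$ having at least $j$ copies of $k$ is exactly $P(m - jk)$: deleting $j$ copies of $k$ sends such a partition to an arbitrary partition of $m - jk$, and the operation is reversible. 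This yields the displayed identity.

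Next I would substitute $m = n + ik$ and sum over $i$, obtaining
$$\sum_{i=0}^{r-1} Q_{rk}(n+ik) \;=\; \sum_{i=0}^{r-1} \sum_{j \geq 1} P\bigl(n + ik - jrk\bigr) \;=\; \sum_{i=0}^{r-1}\sum_{j\geq1} P\bigl(n - k(jr - i)\bigr).$$
The heart of the argument is then the observation that the map $(i,j) \mapsto \ell = jr - i$, ranging over $0 \leq i \leq r-1$ and $j \geq 1$, is a bijection onto the positive integers $\ell \geq 1$. Indeed, for each $\ell \geq 1$ exactly one of the $r$ consecutive integers $\ell, \ell+1, \ldots, \ell + r - 1$ is divisible by $r$; writing that multiple as $jr$ fixes $j \geq 1$ and then $i = jr - \ell \in \{0, \ldots, r-1\}$ uniquely. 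Re-indexing the double sum by $\ell$ collapses it to $\sum_{\ell\geq1} P(n - k\ell) = Q_k(n)$, and Elder's theorem (Result~\ref{elder}) identifies $Q_k(n)$ with $V_k(n)$, which completes the proof.

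I expect the main obstacle to be essentially bookkeeping rather than conceptual: one must verify carefully that the re-indexing $(i,j)\mapsto jr-i$ hits every positive integer exactly once, with no value of $\ell$ omitted or double-counted, and that the boundary conventions $P(0)=1$ and $P(t)=0$ for $t<0$ are applied consistently so that no spurious terms survive. An alternative and more combinatorial route would be to construct an explicit bijection between the $(rk)$-occurrences taken across the partitions of $n, n+k, \ldots, n+(r-1)k$ and the parts occurring $k$ or more times across the partitions of $n$; the generating-identity approach sketched above is shorter, however, and avoids having to track such a correspondence directly.
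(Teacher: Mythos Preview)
Your proof is correct and follows essentially the same route as the paper: both arguments rest on the identity $Q_k(m)=\sum_{j\ge 1}P(m-jk)$, regroup the resulting sum of $P$-values into $r$ subsums each of which is recognised as $Q_{rk}(n+ik)$, and then invoke Elder's theorem to pass from $Q_k(n)$ to $V_k(n)$. The only cosmetic differences are that you derive the auxiliary identity by a direct layer-by-layer count (the paper obtains it by iterating the recurrence $Q_k(n)=Q_k(n-k)+P(n-k)$ of Lemma~\ref{lem1}) and that you phrase the regrouping as an explicit bijection $(i,j)\mapsto jr-i$ where the paper speaks of distributing over congruence classes; the underlying computation is identical.
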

These two results complete Results~\ref{stanley} and~\ref{elder}, and trace all possible avenues for generalizing the results proposed by Stanley and Elder. We prove both the generalizations in Section~\ref{proofgen}.

In the theory of integer partitions, an array of elegant congruence relations for partition function $P(n)$ were proposed by Ramanujan. He proposed and proved the following.
\begin{result}[Ramanujan]
\label{ramanujan}
For every non-negative $n \in \mathbb{Z}$,
\begin{eqnarray*}
p(5n + 4) & \equiv & 0 \pmod{5},\\
p(7n + 5) & \equiv & 0 \pmod{7},\\
p(11n + 6) & \equiv & 0 \pmod{11}.
\end{eqnarray*}
\end{result}
Ramanujan also conjectured that there exist such congruence modulo arbitrary powers of 5, 7, 11. A lot of eminent mathematicians have worked on similar results for a long time, and the best result till date is: {\em ``there exist such congruence relations for all non-negative integers which are co-prime to 6''}. This result was proved by Ahlgren and Ono~\cite{ono1}. In this paper, we propose a simple analogue to the Ramanujan results that holds true for the function $Q_k(n)$, where $k, n \in \mathbb{Z}$. The formal statement of our analogue is as follows.

\begin{theorem}
\label{ramanujanext}
Given any non-negative integer $n$, following the notation as before, one has
\begin{eqnarray*}
Q_{5} (5n + 4) & \equiv & 0 \pmod{5},\\
Q_{7} (7n + 5) & \equiv & 0 \pmod{7},\\
Q_{11} (11n + 6) & \equiv & 0 \pmod{11}.
\end{eqnarray*}
\end{theorem}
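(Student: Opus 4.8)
The plan is to reduce the statement entirely to Ramanujan's congruences (Result~\ref{ramanujan}) by first expressing $Q_k(n)$ in terms of the ordinary partition function $P$. The crucial observation is a simple counting identity: the total number of occurrences of the part $k$ among all partitions of $n$ can be tallied by counting, for each $j \geq 1$, how many partitions of $n$ contain \emph{at least} $j$ copies of $k$, and then summing these counts over $j$. A partition that contains exactly $m$ copies of $k$ contributes $m$ to $Q_k(n)$, and it is counted once for each of $j = 1, 2, \dots, m$ in the proposed sum, so the two tallies agree.

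First I would establish the identity
$$Q_k(n) \: = \: \sum_{j \geq 1} P(n - jk),$$
under the standard convention that $P(m) = 0$ for $m < 0$ and $P(0) = 1$. The proof is a bijection: deleting $j$ copies of $k$ from a partition of $n$ that has at least $j$ copies of $k$ yields a partition of $n - jk$, and conversely adjoining $j$ copies of $k$ to any partition of $n - jk$ produces a partition of $n$ with at least $j$ copies of $k$. Hence the number of partitions of $n$ with at least $j$ copies of $k$ is exactly $P(n - jk)$, and summing over $j$ gives the identity.

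Next I would specialize the identity. Taking $k = 5$ and replacing $n$ by $5n + 4$ gives
$$Q_5(5n + 4) \: = \: \sum_{j \geq 1} P(5n + 4 - 5j) \: = \: \sum_{j \geq 1} P\bigl(5(n - j) + 4\bigr).$$
Every nonzero term on the right is of the form $P(5m + 4)$ with $m = n - j \geq 0$; note that the argument is always $\equiv 4 \pmod 5$, so it is never $0$ and the $P(0) = 1$ case never intrudes. By Result~\ref{ramanujan}, each such term is divisible by $5$, whence the entire sum is $\equiv 0 \pmod 5$. The same argument with $k = 7$ and $n \mapsto 7n + 5$, and with $k = 11$ and $n \mapsto 11n + 6$, yields the remaining two congruences verbatim, since each summand becomes $P(7m + 5)$ or $P(11m + 6)$ respectively, again divisible by the relevant modulus.

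The only genuine content lies in the counting identity of the first step; once it is in hand, the theorem follows as an immediate corollary of Ramanujan's congruences. I therefore expect no real obstacle beyond carefully justifying the bijection and being precise about the boundary conventions for $P$ (in particular confirming that the argument $5m+4$, and its analogues, can never equal a value that would make a spurious $P(0) = 1$ appear).
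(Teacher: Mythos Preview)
Your proposal is correct and follows essentially the same approach as the paper: both arguments rest on the identity $Q_k(n) = \sum_{j \geq 1} P(n - jk)$ and then observe that specializing to $k=5$ and $n\mapsto 5n+4$ (and analogously for $7$ and $11$) makes every summand a Ramanujan-type term. The only cosmetic difference is that the paper obtains this identity by iterating the recurrence $Q_k(n) = Q_k(n-k) + P(n-k)$ from Lemma~\ref{lem1}, whereas you derive it directly via the ``at least $j$ copies'' bijection; the content is the same.
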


Two common tools for handling integer partitions are Generating functions and Ferrer's diagrams. In the process of generalizing the results of Stanley, Elder and Ramanujan, we also deduce the generating function for $Q_k(n)$ and propose an intuitive explanation of {\em `adding points'} to Ferrer's diagram, which integrates the technical results with their graphical interpretations.

\section{Proof of the Generalizations}
\label{proofgen}
To prove the generalizations stated earlier, we shall require a few preliminary results. One may find the following result in the current literature~\cite{sloane1} on integer partitions.
\begin{result}
\label{res1}
Given any positive integer $n$, one has $Q_1(n) = \sum_{i=0}^{n-1} P(i)$.
\end{result}

We also use the following lemma for our proofs of the generalizations.
\begin{lemma}
\label{lem1}
Given any two positive integers $k, n$, one has $Q_k(n) = Q_k(n-k) + P(n-k)$.
\end{lemma}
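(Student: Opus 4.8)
The plan is to establish this recurrence by a direct combinatorial argument that tracks how often the part $k$ occurs, rather than by manipulating generating functions. The guiding observation is that a partition of $n$ can contain the part $k$ only when $n \geq k$, and that deleting a single copy of $k$ from such a partition produces a partition of $n-k$; this suggests building a bijection and then reading off its effect on the occurrence counts.

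First I would discard all partitions of $n$ that do not contain the part $k$, since they contribute nothing to $Q_k(n)$. On the remaining family, namely the partitions of $n$ having at least one part equal to $k$, I would consider the operation that deletes exactly one copy of $k$; its inverse adjoins a fresh copy of $k$ to an arbitrary partition of $n-k$. These two operations are mutually inverse, so they furnish a bijection between the partitions of $n$ that contain at least one $k$ and the full set of partitions of $n-k$. Writing $m_k(\lambda)$ for the multiplicity of the part $k$ in a partition $\lambda$, the key numerical fact is that if $\lambda \vdash n$ corresponds to $\mu \vdash n-k$ under this bijection, then $m_k(\lambda) = m_k(\mu) + 1$, since exactly one copy of $k$ was removed.

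The lemma then follows by summing this identity. On one side, $\sum_{\lambda} m_k(\lambda)$ taken over the retained family equals $Q_k(n)$, because the discarded partitions contribute $0$ anyway. On the other side, transporting the sum through the bijection gives $\sum_{\mu \vdash n-k} \bigl(m_k(\mu) + 1\bigr)$, which splits as $\sum_{\mu \vdash n-k} m_k(\mu)$ plus $\sum_{\mu \vdash n-k} 1$; these two pieces are precisely $Q_k(n-k)$ and $P(n-k)$, yielding $Q_k(n) = Q_k(n-k) + P(n-k)$.

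The main point demanding care is the bookkeeping "with multiplicity": I would stress that the relation $m_k(\lambda) = m_k(\mu)+1$ holds for every individual partition in the family, so that the constant term $+1$, once summed over the $P(n-k)$ partitions of $n-k$, contributes exactly the additive $P(n-k)$. I would also check the boundary behaviour to confirm the statement is uniform: when $n = k$ the only eligible partition is $k$ itself, which maps to the empty partition of $0$, consistent with the convention $P(0) = 1$; and when $n < k$ both sides vanish. (As an alternative, one could instead prove the closed form $Q_k(n) = \sum_{j \geq 1} P(n-jk)$ by counting, for each $j$, the partitions of $n$ in which $k$ occurs at least $j$ times, after which the lemma drops out by subtracting the corresponding series for $Q_k(n-k)$; but the single bijection above is the cleaner route.)
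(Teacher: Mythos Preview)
Your proof is correct and is essentially the same argument as the paper's: both observe that the partitions of $n$ containing at least one part $k$ are in bijection with all partitions of $n-k$ via removal/adjunction of a single $k$, and then split the count of $k$'s into the one distinguished copy (contributing $P(n-k)$) plus the remaining copies inside the residual partition (contributing $Q_k(n-k)$). Your write-up is simply more explicit about the bijection and the multiplicity bookkeeping.
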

\begin{proof}
For a fixed positive integer $k$, a part of size $k$ occurs at least once in all partitions of $n$ of the form $\{k, R\}$, where $R$ denotes a partition of $n-k$. This amounts to at least $P(n-k)$ occurrences of $k$ in partitions of $n$. Moreover, the part $k$ may occur within the partition $R$ of $n-k$, which contributes $Q_k(n-k)$ to the total number of occurrences of $k$. Adding the two contributions, we get the number of occurrences of $k$ in partitions of $n$ as $Q_k(n) = Q_k(n-k) + P(n-k)$.
\end{proof}

These preliminary results will be used to prove the generalizations proposed in this paper. The formal proofs of the main results are presented in the following sections.

\subsection{Proof of Theorem~\ref{stanleyext}}
We have $S(n) = Q_1(n) = \sum_{i=0}^{n-1} P(i)$ by combining Result~\ref{stanley} (Stanley) and Result~\ref{res1}. Using Lemma~\ref{lem1} and solving the recurrence relation therein, we also obtain another known result~\cite{sloane2} as follows.
$$ Q_k(n) =  P(n-k) + P(n-2k) + P(n-3k) + \cdots . $$

Consider the set of partitions $P_n = \{ P(0), P(1), P(2), \ldots, P(n-1) \}$. The sum over all these partitions is $S(n)$, and given any positive integer $k$, one may distribute $P_n$ over disjoint copies of congruence classes $\{ P(i), P(i+1), \ldots, P(i+k-1) \: | \: i \equiv n \bmod k \}$. Thus, one may deduce that
$$ S(n) \: = \: Q_1(n) \: = \: \sum_{i=0}^{n-1} P(i) \: = \: \sum_{j=0}^{k-1} \left( P(n+j-k) + P(n+j-2k) + \cdots \right) \: = \: \sum_{j=0}^{k-1} Q_k(n+j).$$
Hence the result, which holds true for any positive integral values of $n$ and $k$.

\subsection{Proof of Theorem~\ref{elderext}}
In this case, we start with Result~\ref{elder} (Elder), which states $V_k(n) = Q_k(n)$. From the proof of Theorem~\ref{stanleyext}, we have the representation of $Q_k(n)$ as 
$$ Q_k(n) =  P(n-k) + P(n-2k) + P(n-3k) + \cdots . $$

Consider the set of partitions $Q_n = \{ P(n-k), P(n-2k), P(n-3k), \ldots, P(0) \}$. The sum over all these partitions is $Q_k(n)$, and given any positive integer $r$, one may distribute $Q_n$ over disjoint copies of congruence classes $\{ P(i), P(i+k), \ldots, P(i+(r-1)k) \: | \: i \equiv n \bmod rk \}$. Thus, we get
\begin{eqnarray*}
V_k(n) \: = \: Q_k(n) \: & = & P(n-k) + P(n-2k) + P(n-3k) + \cdots \\ 
& = & \sum_{j=0}^{r-1} \left( P(n+jk-rk) + P(n+jk-2rk) + \cdots \right) \: = \: \sum_{j=0}^{r-1} Q_{rk}(n+jk).
\end{eqnarray*}
Hence the result, which holds true for any positive integral values of $n$, $k$ and $r$.

\subsection{Proof of Theorem~\ref{ramanujanext}}
Let us prove the case for $Q_5(n)$ and the rest will follow in a similar fashion. Note that we have the following representation for $Q_5(5n + 4)$
$$ Q_5(5n + 4) =  P(5n + 4 - 5) + P(5n + 4 - 10) + P(5n + 4 - 15) + \cdots, $$
where each $P(\cdot)$ term in the expansion is of the same form $P(5m + 4)$. Thus, each term on the right hand side satisfy $P(\cdot) \equiv 0 \pmod{5}$ as per Ramanujan's congruence results (Result~\ref{ramanujan}). Hence, in turn, $Q_5(5n+4) \equiv 0 \pmod{5}$ as well.

The same is true for $Q_7(7n+5)$ and $Q_{11}(11n + 6)$. One can also derive analogous results for higher order Ramanujan congruences. For example, 
\begin{eqnarray*}
Q_5(25n+24) & \equiv & 0 \pmod{5^2},\\ 
Q_5(125n+99) & \equiv & 0 \pmod{5^3}.
\end{eqnarray*}
In fact, one may also prove that if there exist integers $A(m)$ and $B(m)$ such that $P(A(m) \cdot n + B(m)) \equiv 0 \pmod{m}$, then it can be proved easily that $Q_{C(m)} (A(m) \cdot n + B(m)) \equiv 0 \pmod{m}$ for some positive integer $C(m)$ that depends on $m$ and $B(m)$.

\section{Other Results}
\label{other}
In this section, we deduce the generating function of $Q_k(n)$ and put forward a graphical understanding of the technical results in terms of Ferrer's diagrams.

\subsection{Generating function of $Q_k(n)$}
As we deal with the function $Q_k(n)$, it is also interesting to study the generating function of this parameter. The generating function for the partition function $P(n)$ is known to be
$$ F(x) = \sum_{m=0}^{\infty} P(m) \cdot x^m  = \prod_{n=1}^{\infty} \frac{1}{1 - x^n}$$ 
where we assume $P(0) = 1$. In this formula, we count the coefficient of $x^m$ on both sides, where the coefficient on the right hand side is the result of counting all possible ways that $x^m$ is generated by multiplying smaller or equal powers of $x$. This obviously gives the number of partitions of $m$ into smaller or equal parts. What we require for $Q_k(n)$ is to count the number of $k$'s occurring in each of these partitions. Thus, we want to (i) add $r$ to the count if $x^{rk}$ is a member involved from the right hand side, and (ii) not count any of the partitions where no power of $x^k$ is involved. This intuition gives rise to the following generating function for $Q_k(n)$.
\begin{eqnarray*}
G_k(x) \: = \: \sum_{m=0}^{\infty} Q_k(m) \cdot x^m  &=& \frac{1 \cdot x^k + 2 \cdot x^{2k} + 3 \cdot x^{3k} + \cdots }{(1 - x) \cdot (1 - x^2) \cdots (1 - x^{k-1}) \cdot (1 - x^{k+1}) \cdots }\\
& = & (1 - x^k) \cdot (x^k + 2 x^{2k} + 3 x^{3k} + \cdots )  \cdot \prod_{n=1}^{\infty} \frac{1}{1 - x^n} \\
& = & (x^k + x^{2k} + x^{3k} + \cdots )  \cdot \prod_{n=1}^{\infty} \frac{1}{1 - x^n} \:  = \: \frac{x^k}{1 - x^k} \cdot \prod_{n=1}^{\infty} \frac{1}{1 - x^n}.
\end{eqnarray*}

\subsection{Adding Points to Existing Partitions}
Ferrer's diagram is a tool to graphically represent the partitions of an integer using linear horizontal array of dots/stars to denote each partition. In this section, we shall propose and prove a new problem in partition theory using the elegant exposition of Ferrer's diagram. 

Before we prove our result for adding points to existing Ferrer's diagram, let us define the norms with an illustrative example. Consider the Ferrer's diagram of all partitions of $5$, as in Figure~\ref{fer5}. 

\begin{figure}[htb]
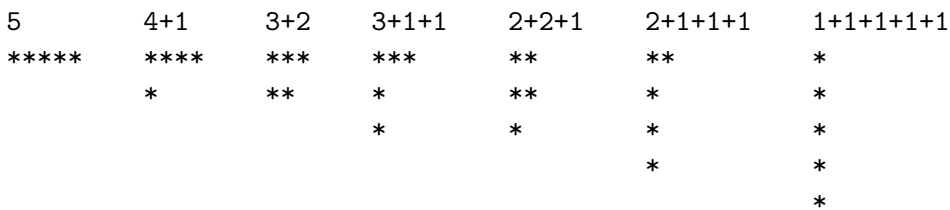

\begin{verbatim}
   5        4+1     3+2    3+1+1    2+2+1    2+1+1+1    1+1+1+1+1
   *****    ****    ***    ***      **       **         *
            *       **     *        **       *          * 
                           *        *        *          *
                                             *          *
                                                        * 
\end{verbatim}
\vspace*{-15pt}
\caption{Ferrer's Diagram for all Partitions of 5}
\label{fer5}
\end{figure}

Let us add one new point to each of the diagrams in this figure such that the resulting arrangements also correspond to valid Ferrer's diagrams. One way is to put the new point as a completely distinct partition in each of the existing ones, and thus get valid Ferrer's diagrams as output (\# denotes the new point in the figure). Another valid way to add the new point is to add it to the existing partitions instead of taking it as a new part. All possibilities are shown in Figure~\ref{fer51}.

\begin{figure}[htb]
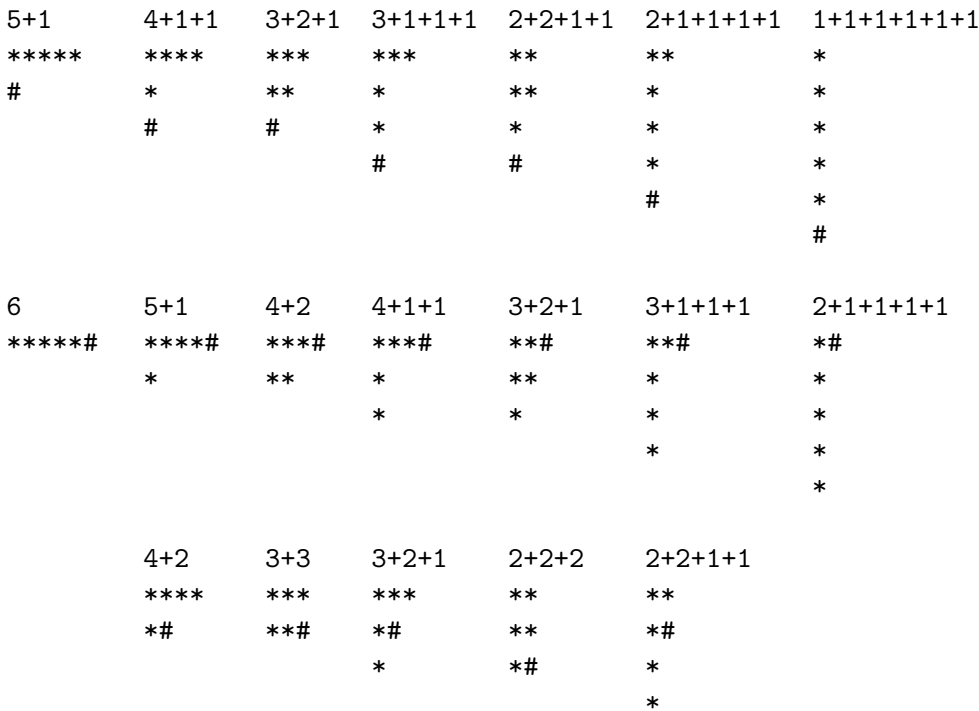

\begin{verbatim}
   5+1      4+1+1   3+2+1  3+1+1+1  2+2+1+1  2+1+1+1+1  1+1+1+1+1+1
   *****    ****    ***    ***      **       **         *
   #        *       **     *        **       *          *
            #       #      *        *        *          *
                           #        #        *          *
                                             #          *
                                                        #

   6        5+1     4+2    4+1+1    3+2+1    3+1+1+1    2+1+1+1+1
   *****#   ****#   ***#   ***#     **#      **#        *#
            *       **     *        **       *          * 
                           *        *        *          *
                                             *          *
                                                        * 

            4+2     3+3    3+2+1    2+2+2    2+2+1+1 
            ****    ***    ***      **       **        
            *#      **#    *#       **       *#         
                           *        *#       *         
                                             *         
\end{verbatim}
\vspace*{-15pt}
\caption{Adding one point to Ferrer's diagram of 5.}
\label{fer51}
\end{figure}

Next, let us explain the process of adding more than one point to a specific diagram. Consider the partition $2+2+1$ of $5$, as shown in Figure~\ref{fer5}. To add 2 points to this partition, say, we consider the addition of the new points as a `packet of 2', instead of adding two separate points. Moreover, we only add this packet (containing two points) in its vertical orientation, i.e, in the form $1+1$, as shown in Figure~\ref{fer221}. With this restrictions imposed on the addition of 2 points, the new partitions that can be generated from $2+2+1$ are as illustrated in Figure~\ref{fer221}.

\begin{figure}[htb]
\begin{verbatim}
   2+2+1    Wrong    Wrong     Right     2+2+1+1+1   3+3+1   Wrong

   **       ##        #        #         **          **#     **
   **                #         #         **          **#     **
   *                                     *           *       *#
                                         #                    #
                                         #
\end{verbatim}
\vspace*{-15pt}
\caption{Adding two new points to partition $2+2+1$.}
\label{fer221}
\end{figure}

Note that we do not allow the packet to be added in horizontal or diagonal orientation, and we also abide by the norms of Ferrer's diagram while adding the packet. Based on this notion of point addition to existing Ferrer's diagram, let us propose and prove the following general result.

\begin{theorem}
\label{ferrer}
Consider adding $k$ points to all partitions of a positive integer $n$ in the Ferrer's diagram, where the new $k$ points are added as a single packet with vertical orientation, as discussed before. Then the total number of new partitions generated in this fashion is equal to the total number of $k$'s occurring in all the partitions of $n+k$.
\end{theorem}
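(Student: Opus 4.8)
The plan is to translate the geometric operation into a counting problem and then invoke Lemma~\ref{lem1} together with Result~\ref{elder} (Elder). First I would fix a partition $\lambda$ of $n$ and determine exactly which placements of a vertical packet of $k$ points yield a valid Ferrer's diagram. A vertical packet lives entirely in one column, say column $c$; since the diagram must stay left-justified with non-increasing row lengths, a point placed in column $c$ of a row is legal only if that row currently has length exactly $c-1$, and for the result to remain non-increasing the $k$ rows receiving points must be the topmost $k$ among the rows of length $c-1$ (placing the packet lower would leave a shorter row above a longer one). Hence a placement in column $c \geq 2$ is available precisely when $\lambda$ has at least $k$ rows of length $c-1$, i.e. when the part value $c-1$ occurs at least $k$ times in $\lambda$, and each admissible column then yields exactly one new diagram (bump the top $k$ equal rows to length $c$). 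The column $c=1$ is special: its ``rows of length $0$'' are the empty rows below the diagram, of which there are always enough, so appending $k$ new parts equal to $1$ is always legal and contributes one further placement.

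Next I would count placements over all partitions of $n$. By the above, the number of placements arising from a fixed $\lambda$ equals $1 + \#\{\, v \geq 1 : v \text{ occurs at least } k \text{ times in } \lambda \,\}$, the leading $1$ being the universal ``append $k$ ones'' move. Summing over all $\lambda$, the constant term contributes $P(n)$ (one for each partition of $n$), while the second term sums to the number of pairs (partition of $n$, part value occurring at least $k$ times), which is exactly $V_k(n)$ by the definition of the ``$k$ or more times'' function. Therefore the total number of new partitions generated equals $P(n) + V_k(n)$.

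Finally I would identify $P(n) + V_k(n)$ with $Q_k(n+k)$. Result~\ref{elder} gives $V_k(n) = Q_k(n)$, and Lemma~\ref{lem1}, taken at $n+k$ in place of $n$, gives $Q_k(n+k) = Q_k(n) + P(n)$. Combining these, $P(n) + V_k(n) = P(n) + Q_k(n) = Q_k(n+k)$, which is the claim.

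The routine bookkeeping (the summations) is immediate once the first step is in place, so the main obstacle is the geometric characterization of the first paragraph: one must argue carefully that legality of a vertical packet forces it to occupy the top $k$ rows of a single block of equal-length rows, that the ``new rows'' case ($c=1$) is the only placement using empty rows and is always available, and that distinct admissible columns give distinct placements, so that no over- or under-counting occurs. It is also worth flagging that ``number of new partitions generated'' must be read with multiplicity: a given partition of $n+k$ can arise from several different source partitions of $n$ (for $n=5$, $k=2$ one checks that $2+2+1+1+1$ is produced both from $2+2+1$ by appending two $1$'s and from $1+1+1+1+1$ by bumping two $1$'s to $2$'s), so the quantity counted is the number of generation events, i.e. of (source, placement) pairs, and this is precisely what matches $Q_k(n+k)$.
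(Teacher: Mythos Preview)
Your proposal is correct and follows essentially the same route as the paper: split the placements into the ``append $k$ ones'' case (contributing $P(n)$) and the ``merge into a block of $\geq k$ equal parts'' case (contributing $V_k(n)$), then apply Result~\ref{elder} and Lemma~\ref{lem1} to obtain $P(n)+V_k(n)=Q_k(n+k)$. Your write-up is in fact more careful than the paper's on two points the paper glosses over: the precise geometric reason a vertical packet in column $c\geq 2$ must occupy the top $k$ rows of a block of equal parts, and the need to read ``number of new partitions'' with multiplicity (as (source, placement) pairs).
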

\begin{proof}
While adding the packet of $k$ points to the partitions of $n$, we will count the new partitions in terms of the categorization we made before; adding the packet as a separate unit, and merging the packet with existing parts. It is quite clear that if we add the packet of $k$ points as a separate unit, then each partition of $n$ will generate a single new partition, namely, the existing partition plus $1 + 1 + \cdots + 1$ ($k$ number of 1's). Thus, the total number of partitions generated in this fashion is $P(n)$, the total number of partitions of $n$.

On the other hand, if we look for merging the new points with existing parts, we can fit in the vertical packet of $k$ stars in the Ferrer's diagram if and only if there is a vertical `permissible' opening, i.e., if there is a vertical slot of length $k$ (or more) where one can put this packet without violating the construction rules of Ferrer's diagram. This is possible when there exist at least $k$ copies of the same part in the existing partition. Two cases arise in such a merging situation:
\begin{itemize}
\item If there are $k$ equal parts in a partition, we will have just enough space to fit $k$ vertical points.
\item If there are more than $k$ equal parts, we will still be able to fit just one packet of $k$ points.
\end{itemize}
Thus, the number of new partitions that will be generated in this fashion from an existing partition is the number of parts that occur $k$ times or more in the existing partition. Note that this count is precisely the one mentioned in Elder's theorem (Result~\ref{elder}), i.e., $V_k(n)$.

Considering both possible categories of adding $k$ points to the partitions of $n$, we get the cumulative count of new partitions as $P(n) + V_k(n)$. We further obtain
$$ P(n) + V_k(n) = P(n) + Q_k(n) = Q_k(n+k) $$
from Result~\ref{elder} and Lemma~\ref{lem1}. Hence the desired result.
\end{proof}

Theorem~\ref{ferrer} provides a nice combinatorial intuition towards the problem of adding points to an existing partition, and also integrates Elder's theorem with the extension of Stanley's theorem and the related Lemma. Further explorations in this direction would be to study the general construction of larger partitions using smaller ones as building blocks.

\section{Conclusion}
\label{conclusion}
In this paper, we generalize Stanley's theorem, Elder's theorem in Integer Partitions by including the notion of subsequent integers in each of the original results. The original results were based on a fixed integer $n$ while we generalize it to include the set of all integers $\{n, n+1, n+2, \ldots \}$ in a natural way. Moreover, we propose analogues of Ramanujan's congruence results for the `number of parts' function $Q_k(n)$ instead of the original presentation for the `partition function' $P(n)$. We show that it is natural to extend all Ramanujan-like congruence relations to $Q_k(n)$ from $P(n)$. In this process of studying $Q_k(n)$, we also deduce the generating function for $Q_k(n)$, and relate the technical results with their graphical interpretations through a novel use of the Ferrer's diagrams.

\end{document}